\pdfoutput=1
\documentclass[aps,pra,reprint,superscriptaddress,nofootinbib,floatfix]{revtex4-2}

\usepackage{amsmath,amssymb,amsthm,bm}
\usepackage{booktabs}
\usepackage{graphicx}
\usepackage{microtype}
\usepackage{xcolor}
\usepackage[colorlinks=true,linkcolor=blue,citecolor=blue,urlcolor=blue]{hyperref}

\newcommand{\Cl}{\mathrm{Cl}}
\newcommand{\C}{\mathbb{C}}
\newcommand{\Tr}{\operatorname{Tr}}
\newcommand{\ii}{\mathrm{i}}
\newcommand{\ket}[1]{\lvert #1\rangle}
\newcommand{\bra}[1]{\langle #1\rvert}
\newcommand{\avg}[1]{\langle #1\rangle}
\newcommand{\calP}{\mathcal{P}}
\newcommand{\supp}{\operatorname{supp}}

\newtheorem{lemma}{Lemma}
\newtheorem{proposition}{Proposition}

\begin{document}

\title{Operator-centric Clifford algebra for variational eigensolvers and finite-shot adaptive selection}

\author{Ginanjar Utama}
\email{ginanjar.utama@gmail.com}
\affiliation{Department of Engineering Physics, Institut Teknologi Bandung, Bandung, Indonesia}

\author{Hermawan Kresno Dipojono}
\email{dipojono@itb.ac.id}
\affiliation{Department of Engineering Physics, Institut Teknologi Bandung, Bandung, Indonesia}

\date{July 19, 2026}

\begin{abstract}
We develop a sparse operator-centric realization of $n$-qubit variational quantum algorithms in the complex Clifford algebra $\Cl(2n,\C)\cong M(2^n,\C)$.  Density operators, gates, observables, channels, fermionic modes, and adaptive-selection observables are represented in one Pauli-word algebra, with the Jordan--Wigner map providing the exact bridge to anticommuting Clifford generators.  We distinguish general Pauli-word rotations from Spin-group rotors and formulate the familiar odd-$Y$ restriction for real-state adaptive ansatzes as an exact transpose-parity statement: for real Hamiltonians and real states, every candidate Pauli word containing an even number of $Y$ factors has zero ADAPT gradient, while odd-$Y$ rotations preserve the real sector.  For the critical open transverse-field Ising chain, a depth-three Hamiltonian variational ansatz gives relative energy errors $4.84\times10^{-5}$, $2.19\times10^{-3}$, and $3.67\times10^{-3}$ for $n=4,5,6$.  A compact local ADAPT pool is exact at $n=4$ but leaves residual errors at larger sizes; a systematic contiguous three-local odd-$Y$ pool reaches relative errors below $1.3\times10^{-12}$ for $n\leq6$.  In 100-seed finite-shot tests at $n=4$, fixed-shot selection succeeds in $0/100$ runs, whereas uniform escalation and confidence-bound racing each succeed in $84/100$ runs; racing lowers median shots by $34\%$.  We claim no asymptotic speedup over matrix methods.  The contribution is a corrected algebraic formulation, a density-operator derivation and implementation of the real-sector pool filter, and a reproducible study of measurement-limited adaptive selection.
\end{abstract}

\maketitle

\section{Introduction}
\label{sec:intro}

The variational quantum eigensolver (VQE) combines a parameterized quantum state with a classical optimization loop to approximate low-energy eigenstates \cite{peruzzo2014vqe}.  Its practical performance depends strongly on the ansatz.  Fixed-depth circuits are predictable but may lack problem-specific expressivity, whereas adaptive methods such as ADAPT-VQE grow a circuit by selecting the operator with the largest instantaneous energy gradient \cite{grimsley2019adapt,tang2021qubitadapt}.  Analytic circuit gradients remove finite-difference bias \cite{mitarai2018qcl,schuld2019gradients,wierichs2022general}, but adaptive selection introduces a different bottleneck: the pool gradients must themselves be ranked from finite measurements.

Several approaches reduce this overhead.  Commuting-gradient measurement, adaptive sampling, subpool exploration, measurement reuse, variance-aware allocation, and best-arm successive elimination all target different parts of the cost \cite{anastasiou2023gradients,majland2023fast,long2024layering,ikhtiarudin2025shot,huang2025bestarm}.  Shot noise remains consequential because a wrong ranking changes the ansatz before parameter optimization begins \cite{scriva2024challenges}.  The present work isolates that ranking problem while keeping parameter reoptimization exact.

Our second motivation is algebraic.  Geometric and Clifford algebras have long been used to represent qubits, density operators, entanglement, and quantum gates \cite{somaroo1998operations,havel2000ga,hrdina2022clifford,cafaro2024revisited}.  A recent operator-centric exposition emphasizes the full complex algebra rather than a selected minimal left ideal \cite{silva2025operator}.  We adopt that operator picture, but make the multi-qubit map explicit: bare Pauli operators on distinct qubits commute, so they cannot be identified directly with mutually anticommuting Clifford generators.  The Jordan--Wigner strings are mandatory, not optional.

This paper does not argue that rewriting matrices as multivectors produces an asymptotic speedup.  Since $\Cl(2n,\C)$ and $M(2^n,\C)$ are isomorphic, a dense element has $4^n$ complex coefficients in either language.  Instead, we use the sparse Pauli-word realization to place state evolution, backward observable propagation, commutator gradients, and measurement statistics in one typed algebra.  At the implementation layer, this storage model is closely related to specialized Pauli-arithmetic backends.  PauliEngine supplies a binary-symplectic C++ core with Python bindings for multiplication, commutators, phase tracking, symbolic coefficients, substitutions, and structural transformations \cite{muller2026pauliengine}.  PauLIB emphasizes bit packing, SIMD-friendly bulk operations, and multithreaded Pauli-sum processing \cite{krotz2026paulib}, while Li \emph{et al.} combine binary symplectic encoding with grouped sparse representations in a Julia/C++ framework for quantum-chemical and many-body workloads \cite{li2026pauliframework}.  These projects are neighboring infrastructure for the sparse arithmetic layer; they do not by themselves provide the Clifford-blade dictionary or the adaptive-selection analysis below.  Conversely, our compact Python kernel is intended as a transparent reference implementation, not as a performance competitor to those systems.

The contribution is fourfold:
\begin{enumerate}
    \item We give a phase-consistent Pauli-word/Clifford-blade dictionary and distinguish Pauli-word rotations, Spin rotors, and the quantum-information Clifford group.
    \item We cast the established odd-$Y$ restriction for real-state qubit-ADAPT as an exact density-operator transpose-parity rule, including an inductive real-sector preservation statement.
    \item We derive exact gradients for shared-parameter Hamiltonian variational ansatzes as sums of physical-gate contributions, clarifying why a naive two-point shift of an aggregate parameter can fail even when the gates within a layer commute.
    \item We compare fixed-shot ranking, uniform cumulative escalation, and confidence-bound racing at a common shot ceiling, using 100 independent seeds and Wilson confidence intervals.
\end{enumerate}

We benchmark the framework on the open transverse-field Ising model (TFIM),
\begin{equation}
H(J,h)=-J\sum_{j=0}^{n-2}Z_jZ_{j+1}-h\sum_{j=0}^{n-1}X_j,
\label{eq:tfim}
\end{equation}
whose thermodynamic critical point is $h/J=1$ \cite{pfeuty1970ising}.

\section{Operator-centric Clifford formulation}
\label{sec:clifford}

\subsection{One operator algebra}

Let $\gamma_0,\ldots,\gamma_{2n-1}$ generate the complex Clifford algebra $\Cl(2n,\C)$,
\begin{equation}
\gamma_a\gamma_b+\gamma_b\gamma_a=2\delta_{ab}\mathbf{1}.
\label{eq:clifford}
\end{equation}
The standard irreducible representation gives
\begin{equation}
\Cl(2n,\C)\cong M(2^n,\C).
\label{eq:isomorphism}
\end{equation}
The operator-centric picture therefore represents every $n$-qubit operator as one algebra element.  A state is a density multivector $\rho=\rho^\dagger\geq0$ with $\Tr\rho=1$; a closed-system gate is a unitary multivector $U$; an observable is Hermitian; and a completely positive trace-preserving map has Kraus elements in the same algebra:
\begin{align}
\rho'&=U\rho U^\dagger,\\
\avg{O}_\rho&=\Tr(O\rho),\\
\mathcal{E}(\rho)&=\sum_k K_k\rho K_k^\dagger,
&\sum_kK_k^\dagger K_k&=\mathbf{1}.
\end{align}
This representation is not more expressive in principle than a spinor space together with its endomorphism algebra.  Its advantage is uniformity: mixed states, channels, observables, and selection gradients do not require a change of mathematical type.  A minimal left ideal remains available when a ket-like object is preferable \cite{hrdina2022clifford,lounesto2001clifford}.

\subsection{Jordan--Wigner generators and phase-normalized blades}

A direct local assignment $\gamma_{2j}\mapsto X_j$, $\gamma_{2j+1}\mapsto Y_j$ fails for $n>1$, because Pauli operators on different qubits commute.  We use the Jordan--Wigner representation \cite{jordan1928}
\begin{align}
\gamma_{2j}&=Z_0Z_1\cdots Z_{j-1}X_j,\label{eq:jwx}\\
\gamma_{2j+1}&=Z_0Z_1\cdots Z_{j-1}Y_j.\label{eq:jwy}
\end{align}
The fermionic operators
\begin{equation}
c_j=\frac{\gamma_{2j}+\ii\gamma_{2j+1}}{2},\qquad
c_j^\dagger=\frac{\gamma_{2j}-\ii\gamma_{2j+1}}{2}
\end{equation}
then obey the canonical anticommutation relations and $c_j^\dagger c_j=(1-Z_j)/2$.

A raw ordered blade $\gamma_{a_1}\cdots\gamma_{a_k}$ is not necessarily Hermitian.  For $A=\{a_1<\cdots<a_k\}$ we therefore define the phase-normalized blade
\begin{equation}
\widehat{\gamma}_A=
\ii^{k(k-1)/2}\gamma_{a_1}\cdots\gamma_{a_k}.
\label{eq:hermitianblade}
\end{equation}
It satisfies $\widehat{\gamma}_A^\dagger=\widehat{\gamma}_A$ and $\widehat{\gamma}_A^2=\mathbf{1}$.  Equations~\eqref{eq:jwx}--\eqref{eq:jwy} map these $4^n$ phase-normalized blades bijectively, up to an overall sign, to the Hermitian Pauli words
\begin{equation}
\calP_n=\{I,X,Y,Z\}^{\otimes n}.
\end{equation}
We store an operator sparsely as
\begin{equation}
A=\sum_{W\in\calP_n}a_WW.
\label{eq:pauliexpansion}
\end{equation}
Because nonidentity Pauli words are traceless,
\begin{equation}
\Tr A=2^n a_I=2^n\avg{A}_0,
\label{eq:trace}
\end{equation}
where $\avg{A}_0$ denotes the identity-word coefficient.  The Hilbert--Schmidt product is $\Tr(A^\dagger B)=2^n\sum_Wa_W^*b_W$.  The active support $|\supp(A)|$ is the number of nonzero coefficients in Eq.~\eqref{eq:pauliexpansion}; sparse multiplication and coefficient intersection are used until support growth makes dense linear algebra preferable.

\subsection{Pauli-word rotations, Spin rotors, and Clifford gates}
\label{sec:rotations}

For a Hermitian Pauli word $P$ with $P^2=\mathbf{1}$,
\begin{equation}
R_P(\theta)=e^{-\ii\theta P/2}
=\cos\frac{\theta}{2}\,\mathbf{1}
-\ii\sin\frac{\theta}{2}\,P,
\label{eq:paulirotation}
\end{equation}
with derivative
\begin{equation}
\partial_\theta R_P(\theta)=-\frac{\ii}{2}P R_P(\theta).
\label{eq:rotationderivative}
\end{equation}
We call Eq.~\eqref{eq:paulirotation} a \emph{Pauli-word rotation}.  With respect to the chosen Euclidean real form, it is a Spin-group rotor only when the Lie-algebra generator $-\ii P$ is a bivector.  Even Clifford grade alone is insufficient: for example, $Z_j$ corresponds to a bivector, whereas $Z_jZ_{j+1}$ corresponds to a grade-four blade under the chosen Jordan--Wigner convention.

The ``Clifford group'' in quantum information is a different object: it is the normalizer of the Pauli group \cite{aaronson2004stabilizer,tolar2018clifford}.  A Clifford gate maps each Pauli word to one Pauli word up to phase under conjugation, while a generic Pauli-word rotation can spread support.  The terminology is kept separate throughout.

For sparse propagation, conjugation by Eq.~\eqref{eq:paulirotation} is evaluated word by word.  If $P$ commutes with a word $W$, it is unchanged; if $P$ anticommutes with $W$,
\begin{equation}
R_P(\theta)W R_P^\dagger(\theta)
=\cos\theta\,W-\ii\sin\theta\,PW.
\label{eq:fastconjugation}
\end{equation}
Thus one Pauli rotation at most doubles the support of an operator.

\subsection{Transpose parity and the real sector}
\label{sec:reality}

For a Pauli word $W$, let $\nu_Y(W)$ be the number of $Y$ factors.  Computational-basis transposition acts as
\begin{equation}
W^T=(-1)^{\nu_Y(W)}W.
\label{eq:transposeparity}
\end{equation}
This is a decomposition by \emph{transpose parity}, not an associative-algebra grading, because transposition reverses product order: $(AB)^T=B^TA^T$.

Odd-$Y$ Pauli pools are established in real-valued qubit-ADAPT calculations \cite{tang2021qubitadapt}, and the vanishing of even-$Y$ gradients for real Hamiltonians and wave functions has been stated explicitly in prior work \cite{mukherjee2023impurity}.  We include the argument to express that restriction at the density-operator level, connect it directly to transposition in the sparse basis, and make preservation of the real sector explicit; it should not be read as the first use of odd-$Y$ pools.

\begin{lemma}[Real-sector preservation]
\label{lem:realpreservation}
Let $\rho$ be real symmetric and let $P$ be a Pauli word with odd $\nu_Y(P)$.  Then $R_P(\theta)\rho R_P^\dagger(\theta)$ is real symmetric for every real $\theta$.
\end{lemma}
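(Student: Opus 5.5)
The approach is to show that the rotation $R_P(\theta)$ is itself a real orthogonal matrix whenever $\nu_Y(P)$ is odd. Conjugating a real symmetric matrix by a real orthogonal matrix then manifestly yields a real symmetric matrix. Realness of $R_P(\theta)$ comes from the fact that $\ii P$ is a real matrix.

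First, we establish that $P$ is purely imaginary and antisymmetric. Every Pauli word is Hermitian, so $P^\dagger=P$. By the transpose-parity identity, Eq.~\eqref{eq:transposeparity}, odd $\nu_Y(P)$ gives $P^T=-P$. Combining the two, $\overline{P}=(P^\dagger)^T=P^T=-P$, so $P$ is purely imaginary. The same conclusion follows directly from the tensor structure: $X,Z,I$ are real, $Y$ is imaginary, and an odd number of $Y$ factors leaves a single overall factor of $\ii$. Hence $M:=-\ii P$ is a real antisymmetric matrix.

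Next, we apply Eq.~\eqref{eq:paulirotation}, which gives $R_P(\theta)=\cos\frac{\theta}{2}\mathbf{1}+\sin\frac{\theta}{2}M$. For real $\theta$ the cosine and sine coefficients are real and $M$ is real, so $R_P(\theta)$ is a real matrix. It is also unitary, hence real orthogonal, with $R_P^\dagger(\theta)=R_P(\theta)^T=\cos\frac{\theta}{2}\mathbf{1}-\sin\frac{\theta}{2}M$.

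Finally, set $\rho'=R\rho R^T$ with $R=R_P(\theta)$. Then $\rho'$ is a product of real matrices and so is real. Symmetry follows from $(R\rho R^T)^T=R\rho^TR^T=R\rho R^T$, using $\rho^T=\rho$. Positivity and unit trace are inherited automatically because the map is a unitary conjugation, although the lemma only asks for real symmetry.

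The argument has no real obstacle. The only point needing care is the sign bookkeeping in the first step, namely that Hermiticity together with $P^T=-P$ forces $\overline{P}=-P$ rather than $\overline{P}=P$. This is where the odd-$Y$ hypothesis enters: for even $\nu_Y(P)$ the word $P$ would be real, $R_P(\theta)$ would have the imaginary component $-\ii\sin\frac{\theta}{2}P$, and conjugation would generally leave the real sector.

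As an alternative cross-check in the sparse Pauli-word language of Eq.~\eqref{eq:fastconjugation}, note that a real symmetric $\rho$ expands only over even-$Y$ words with real coefficients. The anticommuting terms map as $W\mapsto\cos\theta\,W-\ii\sin\theta\,PW$, where $PW$ has odd $\nu_Y$ plus even $\nu_Y$, hence odd $\nu_Y$. Because $P$ and $W$ anticommute, $PW$ is anti-Hermitian, so $-\ii PW$ is a Hermitian word times a real coefficient. Its $Y$-count combined with the phase is consistent with a real symmetric contribution. I would mention this only as a remark, since the matrix argument above is cleaner.
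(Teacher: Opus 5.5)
Your main argument is correct and is essentially the paper's proof: odd $\nu_Y(P)$ makes $P$ purely imaginary and antisymmetric, so $-\ii P$ is real antisymmetric, $R_P(\theta)$ is real orthogonal, and $R_P\rho R_P^T$ is real symmetric. Your closing cross-check contains an error, though. $Y$-counts are not additive under Pauli multiplication (e.g.\ $YX=-\ii Z$). For anticommuting $P$ (odd) and $W$ (even) one has $(PW)^T=W^TP^T=-WP=PW$, so $-\ii PW=\pm Q$ with $Q$ an \emph{even}-$Y$ word, not an odd-$Y$ word as you wrote; an odd-$Y$ contribution would in fact break real symmetry.
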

\begin{proof}
An odd-$Y$ Hermitian word is purely imaginary and antisymmetric.  Therefore $-\ii P$ is real antisymmetric, $R_P(\theta)$ is real orthogonal, and $R_P\rho R_P^T$ is real symmetric.
\end{proof}

\begin{proposition}[Transpose-parity selection rule]
\label{prop:selectionrule}
Let $H$ and $\rho$ be real symmetric.  If a Hermitian Pauli word $P$ has even $\nu_Y(P)$, then
\begin{equation}
\Tr\!\left(\rho[H,P]\right)=0.
\label{eq:zerogradient}
\end{equation}
\end{proposition}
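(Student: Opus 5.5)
The plan is to use invariance of the trace under transposition together with the transpose-parity relation \eqref{eq:transposeparity}. Set $g=\Tr(\rho[H,P])=\Tr(\rho HP)-\Tr(\rho PH)$. Since $\Tr(M^T)=\Tr M$ for every matrix $M$, we have $\Tr(\rho HP)=\Tr\bigl((\rho HP)^T\bigr)=\Tr(P^TH^T\rho^T)$. By hypothesis $H^T=H$ and $\rho^T=\rho$, and \eqref{eq:transposeparity} gives $P^T=(-1)^{\nu_Y(P)}P=P$ when $\nu_Y(P)$ is even.

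Step two: substitute these to obtain $\Tr(\rho HP)=\Tr(PH\rho)$. By cyclicity of the trace, $\Tr(PH\rho)=\Tr(\rho PH)$. Hence the two terms of $g$ coincide, and $g=0$. This is exactly \eqref{eq:zerogradient}.

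As a cross-check, and to tie the result to the ADAPT gradient, I would also record the complementary reality argument. A real symmetric Hermitian matrix has real entries. An even-$Y$ Pauli word is a tensor product of real matrices, since the $Y$ phases $\pm\ii$ pair up into a real sign. Therefore $\Tr(\rho HP)$ is real. Then $\Tr(\rho PH)=\Tr\bigl((\rho HP)^\dagger\bigr)=\overline{\Tr(\rho HP)}=\Tr(\rho HP)$, and the commutator expectation again vanishes. This form also shows that for odd $\nu_Y(P)$ the quantity $\Tr(\rho HP)$ is purely imaginary, so $\ii\Tr(\rho[H,P])$ is real and generally nonzero. That is the consistent counterpart to Lemma~\ref{lem:realpreservation}.

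The only point needing care is the parity statement itself, namely that $Y^T=-Y$ while $I,X,Z$ are symmetric, and that transposition factorizes over tensor products. This is immediate from \eqref{eq:transposeparity}, which is already given. I expect no genuine obstacle. The one subtlety is to avoid treating transposition as multiplicative in the usual order: the proof uses $(\rho HP)^T=P^TH^T\rho^T$ with the order reversed, and then restores the order by cyclicity.
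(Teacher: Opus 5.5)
Your proof is correct and is essentially the paper's argument: the paper notes that $P^T=P$ makes $[H,P]^T=-[H,P]$ antisymmetric, and then uses that the trace of a symmetric matrix times an antisymmetric one vanishes, which is your transpose-then-cyclicity computation in packaged form. Your second, reality-based check is also valid but is not needed.
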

\begin{proof}
An even-$Y$ word is real symmetric.  Hence $[H,P]^T=-[H,P]$ is real antisymmetric.  The trace of the product of a symmetric and an antisymmetric matrix vanishes.
\end{proof}

For a real Hamiltonian and a real reference state, Lemma~\ref{lem:realpreservation} makes Proposition~\ref{prop:selectionrule} inductive: an ansatz built from odd-$Y$ generators stays real, and every even-$Y$ candidate has zero gradient at every ADAPT step.  Among all $4^n$ Pauli words, the odd-$Y$ sector contains
\begin{equation}
N_{\mathrm{odd}\,Y}=\frac{4^n-2^n}{2}
\label{eq:oddcount}
\end{equation}
words.  The theorem prunes the complementary sector exactly, but does not by itself determine which local odd-$Y$ pool is sufficiently expressive.

\section{Variational algorithms}
\label{sec:algorithms}

\subsection{Hamiltonian variational ansatz}

We initialize the TFIM in
\begin{equation}
\rho_+=\ket{+\cdots+}\bra{+\cdots+},
\end{equation}
which is the exact $J=0$ ground state.  A depth-$p$ Hamiltonian variational ansatz (HVA) alternates interaction and field rotations,
\begin{equation}
U(\bm\theta)=\prod_{\ell=1}^{p}
\left[
\prod_{j=0}^{n-1}R_{X_j}(\beta_\ell)
\prod_{j=0}^{n-2}R_{Z_jZ_{j+1}}(\gamma_\ell)
\right],
\label{eq:hva}
\end{equation}
with the rightmost factor applied first.  Each layer has two shared parameters and the objective is
\begin{equation}
E(\bm\theta)=\Tr\!\left[H U(\bm\theta)\rho_+U^\dagger(\bm\theta)\right].
\end{equation}

The two-point parameter-shift rule is exact for one physical gate generated by a two-eigenvalue Pauli word.  It is not generally exact when one shifts a shared parameter controlling an entire layer.  In the present HVA the gates within each field block commute with one another, as do those within each interaction block; nevertheless, the aggregate generator has more than two eigenvalues.  The exact derivative of shared parameter $\theta_s$ is the sum of physical-gate contributions.  Let $G_k=R_{P_k}(\theta_{s(k)})$, let $\rho_k$ be the state immediately after gate $k$, and let $O_k$ be $H$ propagated backward through all later gates.  Then
\begin{equation}
\frac{\partial E}{\partial\theta_s}
=\sum_{k:s(k)=s}\operatorname{Re}\left[
-\frac{\ii}{2}\Tr\!\left(O_k[P_k,\rho_k]\right)
\right].
\label{eq:adjointgradient}
\end{equation}
A forward state sweep and backward observable sweep compute all shared gradients.  In the code, Eq.~\eqref{eq:adjointgradient} is checked against both finite differences and the sum of individual-gate parameter shifts.

\subsection{ADAPT selection as an observable}

Suppose a candidate $R_P(\theta)$ is appended to the current state $\rho$.  At $\theta=0$,
\begin{align}
g_P
&=\left.\frac{\partial}{\partial\theta}
\Tr\!\left[H R_P(\theta)\rho R_P^\dagger(\theta)\right]
\right|_{\theta=0}\\
&=-\frac{\ii}{2}\Tr\!\left(\rho[H,P]\right)
=\Tr(G_P\rho),
\label{eq:adaptgradient}\\
G_P&=-\frac{\ii}{2}[H,P]=G_P^\dagger.
\label{eq:selectionobservable}
\end{align}
Thus ranking candidates does not require candidate-specific shifted circuits; on hardware it requires repeated preparation of the current ansatz state and measurement of the Pauli words in $G_P$.

We use two open-chain, no-repeat pools.  The compact pool is
\begin{equation}
\mathcal{A}_{\mathrm{ZYZ}}=
\left\{\left(\prod_{j\in S}Z_j\right)Y_i:
S\subseteq\{i-1,i+1\}\right\},
\label{eq:compactpool}
\end{equation}
with boundary-valid neighbors only.  It has $4n-4$ elements for $n\geq2$.  The systematic pool $\mathcal{A}_{\mathrm{local3}}$ contains every contiguous one-, two-, and three-site word over $X,Y,Z$ with odd $Y$ parity.  Its open-chain size is
\begin{equation}
|\mathcal{A}_{\mathrm{local3}}|
=\sum_{\ell=1}^{3}(n-\ell+1)\frac{3^\ell-1}{2}.
\label{eq:local3size}
\end{equation}
At each step, the largest $|g_P|$ is appended at zero angle and all parameters are reoptimized.  Because the old optimum remains feasible, exact reoptimization guarantees nonincreasing optimized energy and preserves the variational upper-bound property $E(\bm\theta)\geq E_0$.  This monotonicity is structural; it does not imply that a noisy operator choice was efficient.

\subsection{Finite-shot estimators and selection policies}
\label{sec:shots}

For the TFIM pools, $G_P$ has a real Pauli decomposition
\begin{equation}
G_P=\sum_w c_w W_w,\qquad c_w\in\mathbb{R}.
\end{equation}
Independent measurement of word $W_w$ with $N_w$ shots gives $\widehat{m}_w=2k_w/N_w-1$ and
\begin{equation}
\widehat{g}_P=\sum_wc_w\widehat{m}_w.
\label{eq:ghat}
\end{equation}
The simulator samples the exact binomial distribution.  Selection decisions use an empirical variance with a one-half pseudocount,
\begin{align}
\widetilde{p}_w&=\frac{k_w+1/2}{N_w+1},
&\widetilde{m}_w&=2\widetilde{p}_w-1,\\
\widehat{\sigma}_P^2
&=\sum_wc_w^2\frac{1-\widetilde{m}_w^2}{N_w+1}.
\label{eq:empiricalvariance}
\end{align}
This avoids zero estimated uncertainty when a finite sample happens to contain identical outcomes.

We compare three policies at a common base budget and ceiling per measured Pauli word:
\begin{enumerate}
\item \emph{Fixed}: measure every candidate once at the base budget.
\item \emph{Uniform escalation}: double cumulative shots for every remaining candidate until the statistical gates pass or the ceiling is reached.
\item \emph{Racing}: use the same cumulative doubling, but remove a candidate $i$ when its upper confidence bound is below the best lower bound.  This is a confidence-bound successive-elimination policy in the best-arm formulation of generator selection \cite{huang2025bestarm}.  With $\kappa=3$,
\begin{equation}
U_i=|\widehat g_i|+\kappa\widehat\sigma_i,
\qquad
L_i=\max(0,|\widehat g_i|-\kappa\widehat\sigma_i),
\end{equation}
and retain $i$ only if $U_i\geq\max_jL_j$.
\end{enumerate}
The leading candidate must first be statistically nonzero,
\begin{equation}
|\widehat g_{(1)}|\geq\kappa\widehat\sigma_{(1)},
\end{equation}
and, when rank resolution is required, satisfy
\begin{equation}
|\widehat g_{(1)}|-|\widehat g_{(2)}|
\geq\kappa\sqrt{\widehat\sigma_{(1)}^2+\widehat\sigma_{(2)}^2}.
\label{eq:rankgate}
\end{equation}
The deterministic ADAPT threshold is applied only after a statistically nonzero signal is established.  At the ceiling, an ambiguous but nonzero leader may be accepted.  These intervals are transparent heuristics, not simultaneous confidence guarantees: word covariances, multiple comparisons, and winner-selection bias are not corrected.

\section{Computational protocol}
\label{sec:protocol}

All calculations use $J=h=1$ and open boundaries.  Exact diagonalization provides validation targets but is not used to propagate variational states.  Sparse Pauli-word conjugation, Eq.~\eqref{eq:fastconjugation}, is used for forward states and backward observables.  The optimizer is L-BFGS-B with analytic gradients; a deterministic Adam fallback is included in the replication code.

For the HVA, we use $n=4,5,6$, depths $p=1,2,3$, and depth continuation: the optimized depth-$p$ parameters initialize the first $2p$ entries at depth $p+1$, with the new layer initialized to $(0.05,0.05)$.  We report relative energy error
\begin{equation}
\epsilon_E=\frac{|E-E_0|}{|E_0|},
\end{equation}
fidelity to the exact ground-state subspace, and peak Pauli supports encountered during the forward and backward sweeps.

Exact ADAPT uses gradient threshold $10^{-7}$.  The compact pool is allowed to run until convergence or pool exhaustion; the local-three pool is limited to 20 appended operators.  Finite-shot tests use $n=4$, the compact pool, at most eight operators, exact parameter reoptimization, and 100 independent seeds per configuration.  A run is counted as successful when $\epsilon_E<10^{-3}$.  Binomial success intervals are two-sided $95\%$ Wilson intervals \cite{wilson1927probable}.  The strategy comparison fixes the base at 128 shots and the ceiling at 4096 shots per Pauli word.  A separate racing sweep fixes the base at 64 and varies the ceiling over $512,2048,8192$, thereby separating the starting budget from the gradient-resolution floor.

The algebraic kernel is checked against dense matrices for products, adjoints, traces, Hilbert--Schmidt norms, Clifford anticommutation, inverse Jordan--Wigner decoding, fermionic anticommutation, gate unitarity, Bell-state correlations, Kraus completeness, and Trotter dynamics.  The algorithm tests additionally verify Eq.~\eqref{eq:adjointgradient}, Eq.~\eqref{eq:adaptgradient}, cumulative-sampling equivalence, estimator calibration, deterministic seeds, purity, $E\geq E_0$, and the fixed shot ceiling.

\section{Results}
\label{sec:results}

\subsection{Fixed-depth HVA}

Figure~\ref{fig:deterministic}(a) shows that increasing HVA depth improves the critical-point energy for all tested sizes, but a fixed depth does not maintain a size-independent error.  At $p=3$, the relative error rises from $4.84\times10^{-5}$ at $n=4$ to $3.67\times10^{-3}$ at $n=6$.  Table~\ref{tab:hva} also reports active supports.  The state support reaches half of the complete Pauli basis in these runs, while the backward observable remains substantially sparser.

\begin{figure*}[t]
\centering
\includegraphics[width=0.485\textwidth]{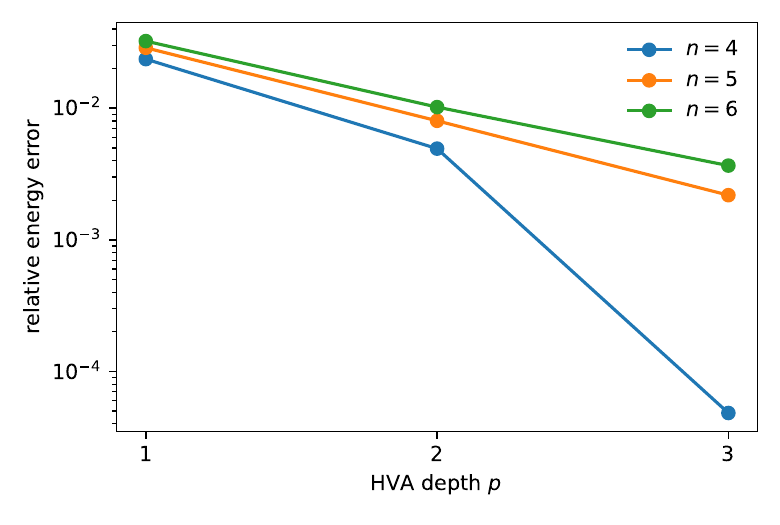}\hfill
\includegraphics[width=0.485\textwidth]{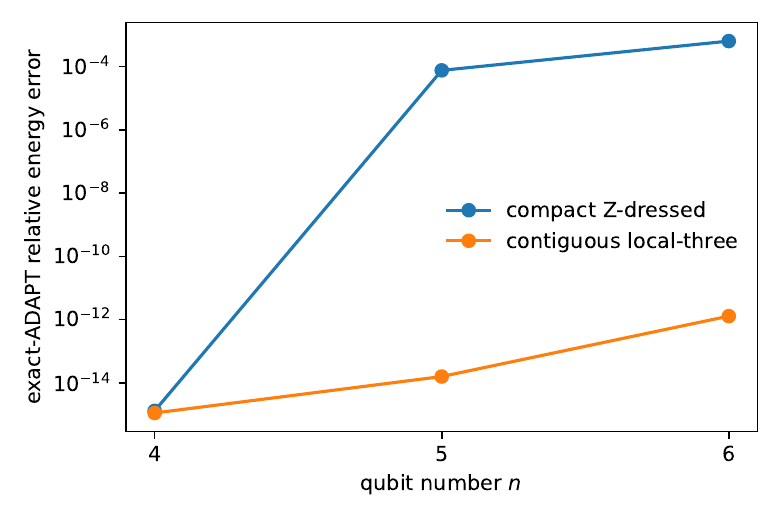}
\caption{Deterministic critical-point results.  (a) Relative energy error of the Hamiltonian variational ansatz versus depth.  (b) Exact-ADAPT error for the compact $\mathcal{A}_{\mathrm{ZYZ}}$ pool and the systematic $\mathcal{A}_{\mathrm{local3}}$ pool.  The local-three runs use a 20-operator cap.}
\label{fig:deterministic}
\end{figure*}

\begin{table}[t]
\caption{Depth-three HVA at $J=h=1$.  $S_\rho$ and $S_O$ are the largest Pauli supports in the forward-state and backward-observable sweeps.}
\label{tab:hva}
\centering
\begin{tabular}{ccccc}
\toprule
$n$ & $\epsilon_E$ & Fidelity & $S_\rho$ & $S_O$\\
\midrule
4 & $4.84\times10^{-5}$ & 0.999948 & 128 & 28\\
5 & $2.19\times10^{-3}$ & 0.995927 & 512 & 45\\
6 & $3.67\times10^{-3}$ & 0.988897 & 2048 & 66\\
\bottomrule
\end{tabular}
\end{table}

\subsection{Exact ADAPT and pool expressivity}

The compact pool is highly effective at $n=4$: six operators recover the numerical ground-state energy.  Its residual error becomes visible at $n=5$ and $n=6$, even though the selected state retains high fidelity.  This is a pool-expressivity limitation rather than a failure of the transpose-parity theorem; the theorem removes impossible candidates but does not guarantee that a small local subset spans a sufficiently rich tangent space.

The systematic local-three pool removes this stagnation.  It reaches $1.59\times10^{-14}$ at $n=5$ after 17 selections and $1.29\times10^{-12}$ at $n=6$ within the 20-operator budget.  The final $n=6$ state uses 1054 Pauli coefficients out of 4096.  Table~\ref{tab:exactadapt} summarizes the comparison.  At $n=4$, the seventh local-three angle is below $3\times10^{-8}$, so the numerically relevant circuit is effectively six-parameter, consistent with the compact result.

\begin{table}[t]
\caption{Exact ADAPT at the critical point.  ``Ops.'' is the number appended before the gradient threshold or 20-operator cap; $S_f$ is final state support.}
\label{tab:exactadapt}
\centering
\begin{tabular}{cccccc}
\toprule
$n$ & Pool & Size & Ops. & $\epsilon_E$ & $S_f$\\
\midrule
4 & compact & 12 & 6 & $1.31\times10^{-15}$ & 70\\
4 & local3 & 42 & 7 & $1.12\times10^{-15}$ & 70\\
5 & compact & 16 & 8 & $7.61\times10^{-5}$ & 240\\
5 & local3 & 60 & 17 & $1.59\times10^{-14}$ & 272\\
6 & compact & 20 & 10 & $6.41\times10^{-4}$ & 820\\
6 & local3 & 78 & 20 & $1.29\times10^{-12}$ & 1054\\
\bottomrule
\end{tabular}
\end{table}

\subsection{Finite-shot selection policies}

Figure~\ref{fig:strategies} and Table~\ref{tab:strategies} compare the three policies at a common base and ceiling.  A single fixed batch produces no successful runs: $0/100$, with Wilson interval $[0.0\%,3.7\%]$.  Uniform escalation and racing both produce $84/100$ successes, with interval $[75.6\%,89.9\%]$, and the same median relative error $3.11\times10^{-4}$.  Racing reduces the median measurement cost from $0.803$ million to $0.530$ million shots, a $34\%$ reduction, by ceasing to remeasure candidates whose confidence bounds no longer overlap the leaders.

\begin{figure*}[t]
\centering
\includegraphics[width=0.485\textwidth]{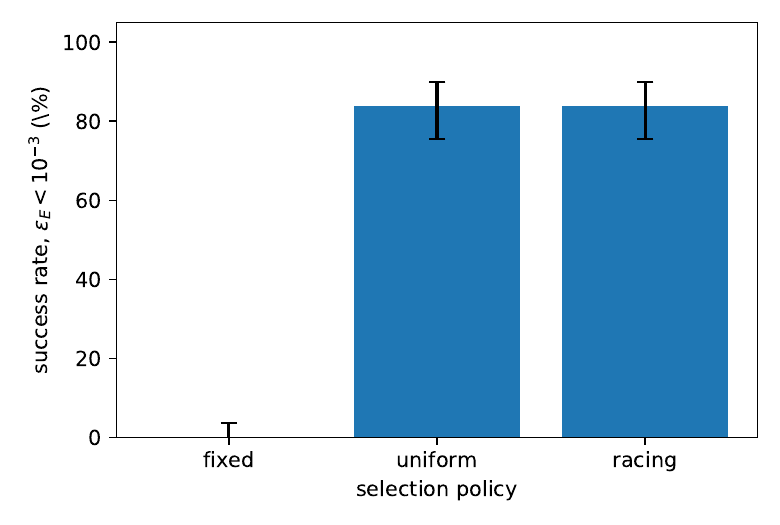}\hfill
\includegraphics[width=0.485\textwidth]{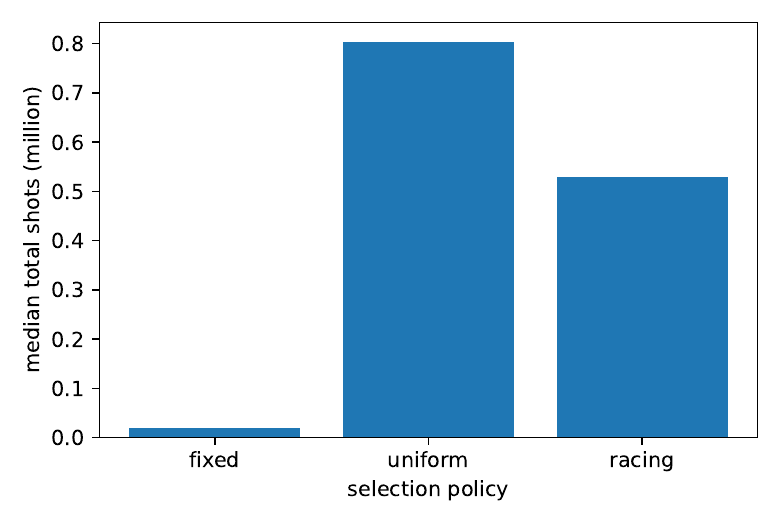}
\caption{Finite-shot policy comparison at $n=4$, base 128 and ceiling 4096 shots per Pauli word.  (a) Success probability for $\epsilon_E<10^{-3}$ with $95\%$ Wilson intervals.  (b) Median total measurements.  Uniform escalation and racing have equal observed success, while racing removes clear losers and lowers the median shot count.}
\label{fig:strategies}
\end{figure*}

\begin{table}[t]
\caption{Finite-shot policy comparison, 100 seeds.  CI is the $95\%$ Wilson interval.}
\label{tab:strategies}
\centering
\begin{tabular}{lccc}
\toprule
Policy & Success & CI (\%) & Median shots\\
\midrule
Fixed & $0/100$ & $[0.0,3.7]$ & 18,816\\
Uniform & $84/100$ & $[75.6,89.9]$ & 802,816\\
Racing & $84/100$ & $[75.6,89.9]$ & 530,176\\
\bottomrule
\end{tabular}
\end{table}

The ceiling sweep in Fig.~\ref{fig:ceiling} separates the initial budget from the finest resolvable gradient.  With the base held at 64, increasing the ceiling from 512 to 8192 shots raises success from $4\%$ to $89\%$ and lowers the median final error from $5.16\times10^{-3}$ to $5.15\times10^{-5}$, at the cost of approximately twenty times more measurements.  The statistical intervals remain wide enough to matter; Table~\ref{tab:ceiling} reports them explicitly.

\begin{figure*}[t]
\centering
\includegraphics[width=0.485\textwidth]{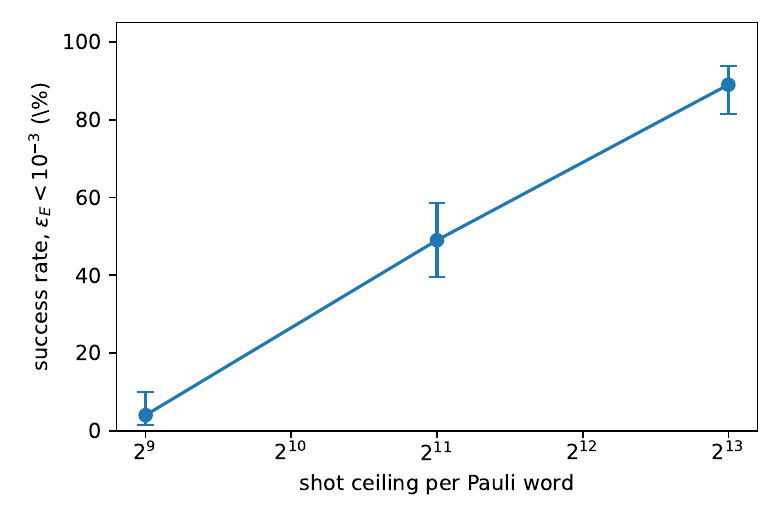}\hfill
\includegraphics[width=0.485\textwidth]{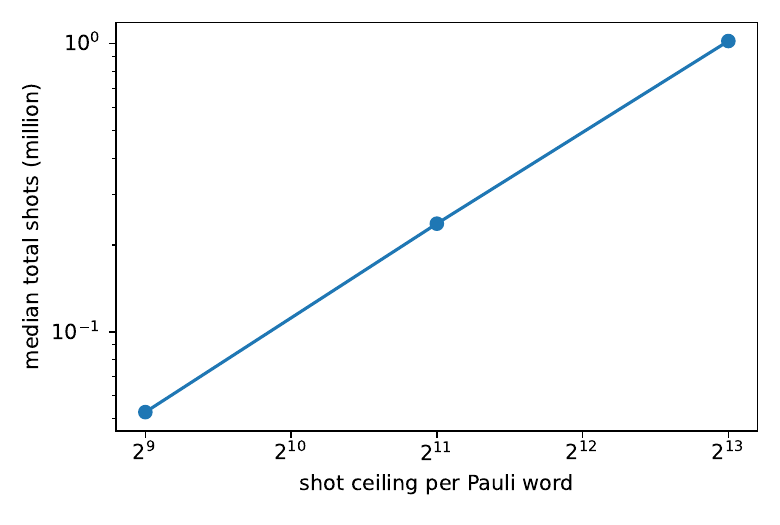}
\caption{Racing with a fixed base of 64 shots and a varying per-word ceiling.  (a) Success probability with $95\%$ Wilson intervals.  (b) Median total measurements.  The ceiling controls the smallest gradient separation that the ranking rule can resolve.}
\label{fig:ceiling}
\end{figure*}

\begin{table}[t]
\caption{Racing ceiling sweep, 100 seeds per configuration.}
\label{tab:ceiling}
\centering
\begin{tabular}{rccc}
\toprule
Ceiling & Success & CI (\%) & Median shots\\
\midrule
512 & $4/100$ & $[1.6,9.8]$ & 52,576\\
2048 & $49/100$ & $[39.4,58.7]$ & 236,992\\
8192 & $89/100$ & $[81.4,93.7]$ & 1,019,648\\
\bottomrule
\end{tabular}
\end{table}

Every tested run remains pure to numerical tolerance and obeys $E\geq E_0$.  These properties are expected because the finite-shot ablation perturbs only operator selection; state preparation and parameter reoptimization remain exact.  The frequent rank ambiguity is also expected at a reflection-symmetric point, where symmetry-related candidates can have equal exact gradients.  More shots cannot resolve an exact tie, so a practical algorithm must accept a symmetry-equivalent choice or compare equivalence classes.

\section{Discussion}
\label{sec:discussion}

The operator-centric language is most useful where it exposes a structural restriction or supports a sparse computation.  Proposition~\ref{prop:selectionrule} is one such restriction: the transpose action is diagonal on Pauli words, so an otherwise matrix-level real-symmetry argument becomes a one-line pool filter.  The same representation makes $G_P$ in Eq.~\eqref{eq:selectionobservable} an ordinary Hermitian multivector, so its exact value, finite-shot estimator, and uncertainty are handled without changing data structures.

The results also delineate what the formulation does \emph{not} provide.  The compact pool's stagnation at $n=5,6$ shows that symmetry compatibility is not expressivity.  Likewise, sparse Pauli support is workload dependent.  Clifford conjugations preserve word cardinality \cite{aaronson2004stabilizer}, and one Pauli rotation at most doubles it, but generic non-Clifford evolution eventually approaches dense $4^n$ support.  The $n=6$ HVA forward states already occupy 2048 words, while backward local observables remain much smaller.  A practical simulator should therefore switch representations according to support, not insist on sparse Clifford algebra in all regimes.

Our finite-shot implementation of confidence-bound racing is complementary to measurement grouping and reuse.  Anastasiou \emph{et al.} reduce the number of distinct commuting gradient measurements \cite{anastasiou2023gradients}; Ikhtiarudin \emph{et al.} reuse Pauli outcomes and allocate shots by variance \cite{ikhtiarudin2025shot}; and Huang and Izmaylov formulate generator selection as best-arm identification with successive elimination \cite{huang2025bestarm}.  The present ablation isolates the last mechanism at a fixed ceiling.  Combining all three is a natural next step: group commuting words, reuse compatible outcomes across candidates and iterations, then eliminate candidates using covariance-aware confidence regions.

The shared-parameter result has a broader methodological implication.  A layer can consist of mutually commuting Pauli gates while its aggregate generator has many eigenvalues.  The standard two-shift formula then does not apply to a simultaneous shift of the entire layer.  Applying parameter shift at the physical-gate level, or using the exact adjoint sweep, avoids a silent gradient error.

\section{Limitations}
\label{sec:limitations}

The numerical claims are deliberately bounded.  First, $n\leq6$ verifies correctness and qualitative behavior but does not establish favorable scaling.  Second, only ADAPT selection is sampled; optimizer shot noise, device decoherence, state-preparation and measurement error, calibration drift, and compilation constraints are absent.  Third, each Pauli word is measured independently.  Covariance between commuting words, simultaneous measurement, classical shadows \cite{huang2020shadows}, and reuse across candidates can lower the actual cost.  Fourth, the $3\sigma$ signal, gap, and racing rules are heuristics without familywise coverage, and the selected leader is subject to winner's bias.  Fifth, the no-repeat local pools are problem-specific greedy choices; exact reoptimization ensures monotonic energy but cannot refund a poor ansatz slot or the shots spent selecting it.  Finally, the dense exact eigensolver remains the validation backend, so the present experiments are not demonstrations of quantum advantage.

\section{Conclusion}

We have formulated fixed-depth and adaptive VQE in a sparse operator-centric realization of $\Cl(2n,\C)$.  The corrected Jordan--Wigner dictionary separates anticommuting Clifford generators from local Pauli operators; phase-normalized blades make the Pauli correspondence precise; and Pauli-word rotations are distinguished from Spin rotors.  The analytic organizing principle is a density-operator transpose-parity formulation of the established odd-$Y$ restriction for real-state ADAPT selection.  Numerically, the rule supports compact pools but does not replace an expressivity analysis: a systematic local-three pool is needed to maintain numerical accuracy through $n=6$.  Under finite-shot selection, cumulative escalation is essential, and confidence-bound racing retains the observed success of uniform escalation with one-third fewer median shots.  These results provide a reproducible baseline for future covariance-aware, measurement-reusing, and hardware-noisy adaptive eigensolvers.

\appendix

\section{Hermitian phase normalization of Clifford blades}
\label{app:blades}

For $A=\{a_1<\cdots<a_k\}$, let $q=k(k-1)/2$.  Reversing the ordered product requires $q$ transpositions, hence
\begin{equation}
(\gamma_{a_1}\cdots\gamma_{a_k})^\dagger
=(-1)^q\gamma_{a_1}\cdots\gamma_{a_k}.
\end{equation}
Using Eq.~\eqref{eq:hermitianblade},
\begin{align}
\widehat\gamma_A^\dagger
&=(-\ii)^q(-1)^q\gamma_{a_1}\cdots\gamma_{a_k}
=\ii^q\gamma_{a_1}\cdots\gamma_{a_k}
=\widehat\gamma_A,\\
\widehat\gamma_A^2
&=\ii^{2q}(-1)^q\mathbf{1}=\mathbf{1}.
\end{align}
In the Jordan--Wigner representation, $\widehat\gamma_A$ is therefore a Hermitian tensor product of one-qubit Paulis with square one, and hence equals a phase-free Pauli word up to sign.  This establishes the basis-level bijection used by the sparse kernel.

\section{Physical-gate parameter shifts}
\label{app:shift}

For one gate $G_k(\theta)=e^{-\ii\theta P_k/2}$, isolate the circuit as $U=U_{>k}G_kU_{<k}$.  Defining $\rho_{k-1}=U_{<k}\rho_0U_{<k}^\dagger$ and $O_k=U_{>k}^\dagger H U_{>k}$ gives
\begin{equation}
E=\Tr(O_kG_k\rho_{k-1}G_k^\dagger).
\end{equation}
Differentiation with Eq.~\eqref{eq:rotationderivative} yields the gate contribution in Eq.~\eqref{eq:adjointgradient}.  Since $P_k^2=1$, the same contribution equals
\begin{equation}
\frac12\left[E_k(\theta+\pi/2)-E_k(\theta-\pi/2)\right],
\end{equation}
where only physical gate $k$ is shifted.  If several gates share parameter $\theta_s$, the chain rule sums these gatewise expressions.  Simultaneously shifting all shared gates is a different two-point experiment and is exact only when the aggregate response has the required two-frequency structure.

\section{Reproducibility protocol}
\label{app:reproducibility}

The replication package contains the algebra kernel, HVA solver, exact and finite-shot ADAPT implementation, adversarial regression tests, raw per-seed records, summary tables, and figure-generation script.  Long experiments are separated into restartable stages:
\begin{verbatim}
python ga_qc_operator_improved.py
python vqe_tfim_rotor_better.py
python adapt_vqe_noisy_better_v2.py --seeds 3
python review_noisy_better.py
python pra_experiments.py --stage hva
python pra_experiments.py --stage adapt
python pra_experiments.py --stage shots \
  --workers 8
python pra_experiments.py --stage figures
\end{verbatim}
For deterministic numerical libraries, the supplied run script sets the BLAS thread count to one per worker.  Strategy seeds are generated from fixed integer schedules stored in the raw JSON.  All figure values in the manuscript are read from the accompanying CSV/JSON outputs rather than transcribed from console logs.  Exact sequences, optimized angles, and distribution quantiles are tabulated in the Supplemental Material.

\section*{Code and data availability}

The complete replication package, including source code, tests, raw and processed numerical data, and figure-generation scripts, is publicly available at \href{https://github.com/gutama/hkdipojono}{github.com/gutama/hkdipojono}.

\bibliographystyle{apsrev4-2}
\bibliography{references}

\end{document}


\begin{center}
{\Large Supplemental Material for ``Operator-centric Clifford algebra for variational eigensolvers and finite-shot adaptive selection''}\\[0.5em]
Ginanjar Utama and Hermawan Kresno Dipojono
\end{center}

\section*{S1. Verification matrix}
The released test suite checks the following independent invariants before generating manuscript results.
\begin{longtable}{p{0.25\textwidth}p{0.67\textwidth}}
\toprule
Layer & Checks \\
\midrule
Algebra kernel & Pauli multiplication, associativity fuzz tests, $(AB)^\dagger=B^\dagger A^\dagger$, trace and Hilbert--Schmidt inner product against dense matrices.\\
Clifford bridge & Jordan--Wigner anticommutation across qubits, inverse blade decoding, phase-normalized blade Hermiticity, and distinction between bivector Spin generators and higher-grade Pauli rotations.\\
Fermionic layer & Canonical anticommutation relations, nilpotency, and $c_j^\dagger c_j=(1-Z_j)/2$.\\
States and channels & Density-matrix positivity, purity, Bell/CHSH identities, partial trace, partial transpose, Kraus completeness, and channel fixed points.\\
HVA & Analytic adjoint gradients against finite differences and the sum of physical-gate parameter shifts; failure of the naive shared-parameter two-shift rule is detected explicitly.\\
ADAPT & Hermiticity of $G_P=-\ii[H,P]/2$, word-sum equality with the exact gradient, real-sector selection rule, exact final-state purity, and $E\geq E_0$.\\
Statistics & Incremental sampling equals one combined binomial batch, estimator mean and variance calibration, fixed shot ceiling, deterministic seed replay, and racing shot reduction.\\
\bottomrule
\end{longtable}

\section*{S2. Depth-three HVA parameters}
The main-text HVA results use depth continuation from $p=1$ to $p=3$.  Parameters are ordered $(\gamma_1,\beta_1,\gamma_2,\beta_2,\gamma_3,\beta_3)$.
\begin{table}[h]
\centering
\small
\begin{tabular}{c c c c c c c}
\toprule
$n$ & $\gamma_1$ & $\beta_1$ & $\gamma_2$ & $\beta_2$ & $\gamma_3$ & $\beta_3$\\
\midrule

4 & 0.411965876 & 1.413639501 & 0.893274926 & 1.211760454 & 0.848661266 & 0.513080679 \\
5 & 0.394638410 & 1.403131027 & 0.857848069 & 1.200648231 & 0.825190366 & 0.516382488 \\
6 & 0.389213004 & 1.399760380 & 0.844723474 & 1.198242677 & 0.816721402 & 0.519880428 \\
\bottomrule
\end{tabular}
\end{table}

\section*{S3. Exact ADAPT sequences}
The following tables report the exact selections used in the main text.  Angles are in radians.  Very small terminal angles are retained because the stopping rule acts on the next pool gradient, not by post hoc parameter pruning.

\subsection*{$n=4$, compact $\mathcal{A}_{\mathrm{ZYZ}}$}
\begin{longtable}{r l r}
\toprule
Step & Generator & Optimized angle\\
\midrule
\endhead

1 & \texttt{Y0Z1} & -7.8752347878e-01 \\
2 & \texttt{Z1Y2} & -5.1526042684e-01 \\
3 & \texttt{Z2Y3} & -7.1662466554e-01 \\
4 & \texttt{Y2Z3} & 2.2383263726e-01 \\
5 & \texttt{Z0Y1} & 3.1545423372e-01 \\
6 & \texttt{Y1Z2} & -9.2134787671e-02 \\
\bottomrule
\end{longtable}

\subsection*{$n=5$, local-three $\mathcal{A}_{\mathrm{local3}}$}
\begin{longtable}{r l r}
\toprule
Step & Generator & Optimized angle\\
\midrule
\endhead

1 & \texttt{YZ[0:2]} & -7.5795757966e-01 \\
2 & \texttt{ZY[1:3]} & -7.9560772401e-01 \\
3 & \texttt{ZY[2:4]} & -9.5207833408e-01 \\
4 & \texttt{ZY[3:5]} & -4.2745537986e-01 \\
5 & \texttt{YXZ[0:3]} & 2.1498615419e-02 \\
6 & \texttt{YZX[2:5]} & 5.3604800328e-01 \\
7 & \texttt{XZY[2:5]} & -1.9799328282e-01 \\
8 & \texttt{ZXY[1:4]} & 2.6675725886e-01 \\
9 & \texttt{ZXY[0:3]} & 3.0066183949e-02 \\
10 & \texttt{ZXY[2:5]} & -9.6198797715e-03 \\
11 & \texttt{ZY[0:2]} & 2.6523642835e-01 \\
12 & \texttt{YZX[0:3]} & 4.5480416144e-08 \\
13 & \texttt{YXZ[2:5]} & -2.3526037628e-07 \\
14 & \texttt{XZY[1:4]} & -5.3773517271e-08 \\
15 & \texttt{YXZ[1:4]} & 1.0606885331e-07 \\
16 & \texttt{XYZ[0:3]} & 3.7200490912e-08 \\
17 & \texttt{XYZ[2:5]} & -2.2508359193e-08 \\
\bottomrule
\end{longtable}

\subsection*{$n=6$, local-three $\mathcal{A}_{\mathrm{local3}}$}
\begin{longtable}{r l r}
\toprule
Step & Generator & Optimized angle\\
\midrule
\endhead

1 & \texttt{YZ[0:2]} & -5.9381781767e-01 \\
2 & \texttt{ZY[1:3]} & -4.5158624676e-01 \\
3 & \texttt{ZY[2:4]} & -5.7595630997e-01 \\
4 & \texttt{ZY[3:5]} & -5.3987734575e-01 \\
5 & \texttt{ZY[4:6]} & -6.0420688685e-01 \\
6 & \texttt{YXZ[0:3]} & -3.0045396060e-01 \\
7 & \texttt{ZXY[3:6]} & -3.5783228215e-01 \\
8 & \texttt{XZY[1:4]} & -2.2991511571e-06 \\
9 & \texttt{YZX[2:5]} & -1.4102070170e-08 \\
10 & \texttt{YZ[3:5]} & 1.1790384432e-01 \\
11 & \texttt{YXZ[1:4]} & -1.2120425611e-01 \\
12 & \texttt{ZXY[0:3]} & 1.8448864773e-01 \\
13 & \texttt{ZXY[1:4]} & 3.1688862932e-03 \\
14 & \texttt{YXZ[3:6]} & 2.6001232788e-01 \\
15 & \texttt{YZ[2:4]} & 9.4207441269e-02 \\
16 & \texttt{YXZ[2:5]} & 1.0239248644e-01 \\
17 & \texttt{XYZ[1:4]} & 5.5738049604e-06 \\
18 & \texttt{YZX[0:3]} & -1.6313960480e-06 \\
19 & \texttt{ZXY[2:5]} & -2.0143451972e-01 \\
20 & \texttt{XZY[0:3]} & -3.6017973541e-07 \\
\bottomrule
\end{longtable}

\section*{S4. Finite-shot distribution summaries}
All configurations use 100 independent seeds.  The interquartile range (IQR) is reported in addition to the median shown in the main text.
\begin{longtable}{l r r r r r}
\toprule
Experiment & Base & Ceiling & Median $\epsilon_E$ & IQR $\epsilon_E$ & IQR shots\\
\midrule
\endhead

racing ceiling 512 & 64 & 512 & 5.157e-03 & [2.822e-03,8.849e-03] & [50,480,64,352] \\
racing ceiling 2048 & 64 & 2048 & 2.822e-03 & [5.146e-05,2.822e-03] & [220,752,271,952] \\
racing ceiling 8192 & 64 & 8192 & 5.146e-05 & [5.146e-05,3.111e-04] & [987,216,1,052,224] \\
fixed & 128 & 4096 & 2.221e-02 & [5.157e-03,2.221e-02] & [18,560,18,816] \\
racing & 128 & 4096 & 3.111e-04 & [5.146e-05,3.111e-04] & [517,856,534,528] \\
uniform & 128 & 4096 & 3.111e-04 & [5.146e-05,3.111e-04] & [743,424,815,104] \\
\bottomrule
\end{longtable}
The observed mean fractions of selections within $95\%$ of the exact best gradient and the mean fractions accepted with unresolved top-two rank are:
\begin{longtable}{l r r}
\toprule
Configuration & Near-optimal fraction & Ambiguous fraction\\
\midrule

racing/$N_{\max}=512$ & 0.948 & 0.997 \\
racing/$N_{\max}=2048$ & 0.985 & 0.957 \\
racing/$N_{\max}=8192$ & 0.975 & 0.794 \\
fixed & 0.902 & 0.997 \\
racing & 0.974 & 0.904 \\
uniform & 0.979 & 0.880 \\
\bottomrule
\end{longtable}

\section*{S5. Seed and measurement protocol}
For the strategy comparison, seed index $s=0,\ldots,99$ is mapped to
\[
10^6+10^4 q+s,
\]
where $q=0,1,2$ denotes fixed, uniform, and racing.  For the ceiling sweep,
\[
2\times10^6+10^4 q+s,
\]
where $q=0,1,2$ denotes ceilings $512,2048,8192$.  Each measured Pauli word is sampled independently from its exact binomial distribution.  Cumulative escalation adds only the samples needed to reach the next target budget; previously observed outcomes are never discarded.  Raw counts, selected sequences, energies, fidelities, purities, shot totals, and stopping reasons are included in \texttt{paper\_results/shot\_raw.json}.